\documentclass{jfm}
\usepackage{graphicx}
\usepackage{newtxtext}
\usepackage{newtxmath}
\usepackage{natbib}
\usepackage{booktabs}
\usepackage[hidelinks]{hyperref}
\graphicspath{{figs/}}
\newcommand{\Ret}{\ensuremath{Re_\tau}}
\newcommand{\dV}{\ensuremath{\Delta V}}
\newcommand{\uref}{\ensuremath{u_{\mathrm{ref}}}}
\newcommand{\nut}{\ensuremath{\nu_t}}
\newcommand{\nutan}{\ensuremath{\nu_{\mathrm{tan}}}}
\newcommand{\nueff}{\nutan}
\newcommand{\epsU}{\ensuremath{\epsilon_U}}
\newcommand{\sym}{\operatorname{sym}}
\newcommand{\D}{\ensuremath{\mathcal{D}_1}}
\newcommand{\DD}{\ensuremath{\mathcal{D}_2}}
\newcommand{\amp}{\ensuremath{\mathcal{A}}}

\newtheorem{theorem}{Theorem}

\title{Well-posedness of neural turbulence closures and tangent dissipation}

\author{Zhen Zhang\aff{1}
 \and George Em Karniadakis\aff{1}\corresp{\email{george\_karniadakis@brown.edu}}}

\affiliation{\aff{1}Division of Applied Mathematics, Brown University, Providence, RI 02912, USA}

\shorttitle{Well-posedness of neural turbulence closures}
\shortauthor{Z. Zhang and G. E. Karniadakis}

\newcommand{\mainref}[1]{\ref{#1}}
\makeatletter
\def\@maketitle#1{%
  \par\vspace*{8pt}
  {\raggedright\Large\bfseries\@title\par}
  \vspace{12pt}
  {\raggedright\large\@author\par}
  \vspace{6pt}
  {\raggedright\small\@affiliation\par}
  \vspace{16pt}
}
\makeatother
\hypersetup{pdftitle={Well-posedness of neural turbulence closures and tangent dissipation},
 pdfauthor={Zhen Zhang and George Em Karniadakis}}

\begin{document}
\maketitle
\pagestyle{myheadings}
\thispagestyle{myheadings}
\markboth{Z. Zhang and G. E. Karniadakis}{Well-posedness of neural turbulence closures}

\begin{abstract}
A neural turbulence closure defines a new boundary-value problem,
$R(U)=N(U)+F(U)=0$, with a coupled Jacobian $J(U)=N'(U)+F'(U)$, where $N$ is the
original mean-flow operator and $F$ the learned closure. We establish two consequences
of global tangent dissipation. For a monotone original operator, a positive uniform
margin supplied by the original operator and closure together guarantees existence,
uniqueness and a global inverse-sensitivity bound relating a posteriori solution error
to the a priori residual.
For a general original operator, a dissipative closure cannot worsen tangent
dissipation, but this alone does not guarantee uniqueness. Tangent dissipation depends
on both diffusion and reaction. We study two complementary ways to promote it:
(1) an exact-integral construction enforcing non-negative tangent diffusion while leaving
reaction unconstrained, and (2) a penalty on tangent-reaction violations at sampled states.
Tangent diffusion enters the Jacobian, and non-negative secant eddy viscosity alone does
not control its coercivity. We conduct tests with channel flow at $\Ret=180$--$5200$, which provides a strongly
monotone baseline. Both constrained closures reach accurate solutions in all 50
training-seed/Reynolds-number cases. At $\Ret=1000$, we conduct tests with 10,000 starts for one fixed
network per closure and we find one root for each constrained closure and multiple roots for
the other closures. Although this does not prove uniqueness, it provides strong
empirical evidence for uniqueness of the tested constrained closures.
At $\Ret=5200$, the construction and penalty reduce the reported inverse sensitivity
relative to the original operator by approximately $372\times$ and $11\times$,
respectively.
\end{abstract}

\begin{keywords}
turbulence modelling, neural closure, well-posedness, tangent dissipation
\end{keywords}

\section{Introduction}
\label{sec:intro}

A data-driven turbulence closure supplies a term in the mean-flow equations
\citep{duraisamy2019}. Its substitution defines a new boundary-value problem: does a
solution exist, is it unique, and how sensitive is it to modelling error?

The conditioning of the closed equations has been analysed. \citet{wu2019} showed that
Reynolds-averaged Navier--Stokes (RANS) equations with an explicit data-driven Reynolds
stress can be ill-conditioned; \citet{brener2021} found the equations ill conditioned in every case they
analysed, unless DNS mean-velocity information was supplied.
Both analyses hold the closure frozen, so the operator analysed is that of the unclosed
equations. A trained closure deployed in a solver is a function of the velocity field, its
derivative enters the Jacobian, and the problem whose well-posedness is in question is the
coupled one.

Energy-neutral closures preserve the state-energy budget; related energy-conserving
neural closures support stability of time-dependent evolution \citep{vangastelen2024}.
Energy neutrality alone does not imply tangent dissipation or steady well-posedness.
For the steady problem, we instead seek monotonicity with a uniform coercive margin,
which gives a sufficient guarantee under the assumptions of theorem \ref{thm:main}.

The distinction between secant and tangent diffusion also appears in non-Newtonian
rheology as apparent viscosity, $\tau/\dot\gamma$, versus differential viscosity,
$d\tau/d\dot\gamma$. Positive apparent viscosity can coexist with a negative slope
of the flow curve. In the Johnson--Segalman model, a non-monotone constitutive curve
admits multiple steady plane Poiseuille solutions in appropriate parameter regimes
\citep{johnson1977,fyrillas1999}. This provides a constitutive analogue for neural
closures: a positive eddy viscosity need not imply a monotone tangent response or a unique steady solution.

We study the coupled mean-flow--closure system through its full tangent Jacobian
and prove two claims. For a strongly monotone original operator, global tangent dissipation of
the closure guarantees well-posedness and bounds inverse sensitivity, under the assumptions of
theorem \ref{thm:main}. For a general original operator, the closure improves or
preserves tangent dissipation without guaranteeing uniqueness. We then present two
ways to promote tangent dissipation: a partial construction enforcing non-negative
tangent diffusion, and a sampled penalty on tangent reaction. These complement
implicit treatment of the linear stress \citep{wu2018,wu2019,brener2021}, hybrid
eddy-viscosity treatment \citep{basara2003}, and non-negative eddy-viscosity constraints
\citep{mcconkey2022}; controlling the secant viscosity alone does not control the full
tangent operator. Related work in solid mechanics constructs polyconvex neural energies
supporting existence under additional growth assumptions \citep{tac2022}, and monotone
peridynamic neural models with conditional uniqueness under small deformations
\citep{wang2026}. Finally, channel-flow experiments assess conditioning, solver
convergence and accuracy across training seeds, while multiple initial conditions
test for distinct solutions of each fixed closure.
To our knowledge, this is the first sufficient existence-and-uniqueness result for a
fully coupled RANS problem with a state-dependent neural closure.

\section{Secant and tangent diffusion}
\label{sec:tangent}

Diffusion is one contribution to tangent dissipation; effective reaction can also
contribute. We first examine the diffusion contribution, distinguishing the coefficient
in the constitutive law from that governing its response to perturbations.

Let $\tau$ be the Reynolds shear stress and $S$ the mean shear, and let the closure be
written in eddy-viscosity form, $\tau=\nut(U,S)\,S$, for $S\ne0$. If $\tau(U,0)=0$ and
$\tau$ is differentiable in $S$, the quotient extends continuously to $S=0$ as
$\tau_S(U,0)$. This defines two associated diffusion coefficients.
The \emph{secant} diffusion, of
coefficient
\begin{equation}
\nut=\tau/S ,
\label{eq:secant}
\end{equation}
is the diffusion the constitutive relation asserts about the state, and it is the
diffusion the equation contains when the closure is written as $\D(\nut\D U)$; the
physical constraint of non-negative production, $P=\tau S\ge0$, is a constraint on it. The
\emph{tangent} diffusion, of coefficient
\begin{equation}
\nutan=\frac{\partial\tau}{\partial S}=\nut+S\,\frac{\partial\nut}{\partial S},
\label{eq:tangent}
\end{equation}
is the response of the closure to a perturbation of the state, and it is the diffusion the
Jacobian contains. On the stress--strain curve, $\nut\ge0$ says that the curve lies in the
first and third quadrants and $\nutan\ge0$ that it is monotone; with $\tau(0)=0$ the second
implies the first and not conversely.

Thus $\nut$ and $\nutan$ correspond to the apparent and differential viscosities in
non-Newtonian fluids. The latter enters the linearised diffusion operator
and hence sufficient conditions for well-posedness. A softplus on $\nut$ fixes
the sign of \eqref{eq:secant} and leaves \eqref{eq:tangent} free, which is negative
wherever $S\,\partial_S\nut<-\nut$. Implicit treatment writes $-\D(\nut\D\,\cdot)$
on the left of the equation and so places the secant diffusion in the operator
applied at each iteration, without ensuring uniqueness of the coupled equation
(\S\ref{sec:res:solve}). The observed multiplicity is analogous to that in
non-Newtonian flows: non-monotone constitutive curves in the Johnson--Segalman model
admit multiple steady Poiseuille solutions, with branch selection depending on the
flow history \citep{johnson1977,fyrillas1999}.

\section{Well-posedness of the coupled problem}
\label{sec:wellposed}

\subsection{The coupled residual and tangent dissipation}
\label{sec:coupled}
\label{sec:mechanisms}

Let $N$ collect the unclosed mean-flow terms and $F$ the learned closure. The closed problem is
\begin{equation}
R(U):=N(U)+F(U)=0,\qquad J(U)=A(U)+B(U),\quad A=N'(U),\quad B=F'(U).
\label{eq:coupled}
\end{equation}
The frozen analysis sets $B=0$.

Global tangent dissipation means $\langle v,B(U)v\rangle\le0$ for every state and
perturbation. For a scalar one-dimensional differential expression
$B=c_0+c_1\D+\D(c_2\D\,\cdot)$ on a bounded interval $I=(y_L,y_R)$,
with $\D=\partial_y$ and homogeneous Dirichlet perturbations $v\in H^1_0(I)$,
its symmetric quadratic form separates into reaction and diffusion:
\begin{equation}
\langle v,Bv\rangle=\int_I(c_0-\tfrac12c_1')v^2\,dy
-\int_I c_2(v')^2\,dy .
\label{eq:lemma-sym}
\end{equation}
Tangent dissipation includes both diffusion and reaction. Non-positive reaction and
non-negative diffusion suffice.
Continuum expressions below are understood as
forms on $H^1_0(I)$ paired with its dual, rather than bounded operators on $L^2$.

\subsection{Existence, uniqueness and inverse sensitivity}
\label{sec:theorem}

For the cell-integrated finite-volume residual, let $D=\operatorname{diag}(\dV_i)>0$ and set
\begin{equation}
x=D^{1/2}U,\qquad \mathcal R(x)=D^{-1/2}R(D^{-1/2}x),\qquad
\mathcal R'(x)=\tilde J(U):=D^{-1/2}J(U)D^{-1/2}.
\label{eq:weighted}
\end{equation}
Let $\tilde A=D^{-1/2}AD^{-1/2}$ and $\tilde B=D^{-1/2}BD^{-1/2}$. Write $\|v\|_D=\|D^{1/2}v\|_2$ and $\|r\|_{D^{-1}}=\|D^{-1/2}r\|_2$.

\begin{theorem}
\label{thm:main}
Suppose $N,F\in C^1(\mathbb R^n;\mathbb R^n)$ and the closure is globally tangent
dissipative, $\sym\tilde B(U)\preceq-bI$ for some $b\ge0$ at every state.

\textit{(i) Monotone baseline.} If $\sym\tilde A(U)\preceq-aI$ uniformly with
$a\ge0$ and $m:=a+b>0$, then $-\mathcal R$ is strongly monotone with constant $m$, $R(U)=0$ has exactly one
solution $U_*$, and $\sigma_{\min}(\tilde J(U))\ge m$ at every state. For any reference
field $\uref$, the following global bound holds without linearisation:
\begin{equation}
\|U_*-\uref\|_D\le \frac{\|R(\uref)\|_{D^{-1}}}{m}.
\label{eq:eJr}
\end{equation}
More generally, the solutions of $R(U_i)=f_i$ satisfy
$\|U_1-U_2\|_D\le m^{-1}\|f_1-f_2\|_{D^{-1}}$.

\textit{(ii) General baseline.} Without monotonicity of the baseline, the closure still
satisfies $\sym\tilde J(U)\preceq\sym\tilde A(U)-bI$ at each state, hence
$\omega(J(U))\le\omega(A(U))-b$, where $\omega$ is defined in \eqref{eq:omega}.
This comparison alone implies neither existence nor uniqueness.
\end{theorem}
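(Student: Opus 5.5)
The argument works in the weighted variable $x=D^{1/2}U$ of \eqref{eq:weighted}, where $\mathcal R'(x)=\tilde J(U)=\tilde A(U)+\tilde B(U)$ and the weighted norms become Euclidean: $\|v\|_D=\|x_v\|_2$ and $\|r\|_{D^{-1}}=\|D^{-1/2}r\|_2$.

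\textbf{Part (i).} First, the hypotheses give $\sym\tilde J(U)\preceq-(a+b)I=-mI$ at every state. For the monotonicity inequality I will use the integral mean-value form, which only needs $C^1$. For any $x_1,x_2$, set $\delta=x_1-x_2$ and
\[
\mathcal R(x_1)-\mathcal R(x_2)=\int_0^1\tilde J\bigl(D^{-1/2}(x_2+t\delta)\bigr)\,\delta\,dt .
\]
Pairing with $\delta$ and using the uniform bound on the symmetric part gives
\[
\langle \mathcal R(x_1)-\mathcal R(x_2),\delta\rangle\le -m\|\delta\|_2^2 ,
\]
so $-\mathcal R$ is strongly monotone with constant $m$. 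Cauchy--Schwarz then yields $\|\mathcal R(x_1)-\mathcal R(x_2)\|_2\ge m\|x_1-x_2\|_2$.

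Translating back, for solutions of $R(U_i)=f_i$ we have $\mathcal R(x_i)=D^{-1/2}f_i$, and the displayed inequality is exactly $\|U_1-U_2\|_D\le m^{-1}\|f_1-f_2\|_{D^{-1}}$. Uniqueness follows by taking $f_1=f_2=0$. For \eqref{eq:eJr}, take $U_1=U_*$ with $f_1=0$ and $U_2=\uref$ with $f_2=R(\uref)$. No linearisation is involved, since the mean-value integral is exact.

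For the singular-value bound, any unit $v$ satisfies $\|\tilde J v\|_2\ge -\langle v,\tilde Jv\rangle=-\langle v,\sym\tilde J\,v\rangle\ge m$, hence $\sigma_{\min}(\tilde J(U))\ge m$.

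\textbf{Existence} is the main obstacle; I will invoke the Minty--Browder theorem. The map $-\mathcal R$ is continuous, monotone and coercive on $\mathbb R^n$, because $\langle -\mathcal R(x),x\rangle\ge m\|x\|^2-\|\mathcal R(0)\|\|x\|$. Hence it is surjective, so $\mathcal R(x)=0$ has a solution. If a finite-dimensional argument is preferred instead, there are two routes:
\begin{itemize}
\item The lower bound $\sigma_{\min}\ge m$ makes $\mathcal R$ a local diffeomorphism. Together with $\|\mathcal R(x)\|\to\infty$, it is proper, and the Hadamard global inverse function theorem gives a global diffeomorphism.
\item Alternatively, apply Brouwer's theorem on a large ball $\|x\|\le \|\mathcal R(0)\|/m$. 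On its boundary $\langle \mathcal R(x),x\rangle<0$, so an acute-angle argument gives a zero.
\end{itemize}
I would use the Brouwer/acute-angle version, as it is elementary.

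\textbf{Part (ii)} is algebraic. Since $\sym$ is linear, $\sym\tilde J=\sym\tilde A+\sym\tilde B\preceq\sym\tilde A-bI$. The logarithmic norm \eqref{eq:omega}, presumably defined as $\lambda_{\max}$ of the symmetric part in the $D$-weighted inner product, satisfies $\omega(J)=\lambda_{\max}(\sym\tilde J)$. Weyl monotonicity of the top eigenvalue under the ordering $\preceq$ then gives $\omega(J)\le\omega(A)-b$.

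To justify that this implies neither existence nor uniqueness, I will give a scalar example with $n=1$ and $b=0$:
\begin{itemize}
\item Taking $F=0$ and $N(U)=U^2-1$ gives the two roots $U=\pm1$, so uniqueness fails.
\item Taking $N(U)=U^2+1$ gives no root, so existence fails.
\end{itemize}
In both cases the comparison holds trivially. Adding a small dissipative closure, for example $F(U)=-\epsilon U$, preserves both conclusions.
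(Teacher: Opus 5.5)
Your proof is correct and follows essentially the same route as the paper. Segment integration of $\mathcal R'$ gives strong monotonicity, Cauchy--Schwarz gives $\sigma_{\min}(\tilde J)\ge m$ and the residual and forcing bounds, a Brouwer acute-angle argument gives existence, and adding the two symmetric-form bounds gives (ii).

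Two small points are worth noting. First, on the sphere of radius exactly $\|\mathcal R(0)\|_2/m$ you only get $\langle\mathcal R(x),x\rangle\le 0$, not the strict inequality you claim; equality occurs for $\mathcal R(x)=-mx+c$. So either take a strictly larger radius, as the paper does, or use the non-strict acute-angle lemma. Second, the paper's counterexamples for (ii) are $N(x)=x$, $F(x)=-x$ and $N(x)=x+1$, $F(x)=-x$. They make the stronger point that well-posedness can fail even when the baseline map is invertible, whereas your quadratic baselines are already ill-posed before any closure is added.
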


\begin{proof}
For (i), adding the two symmetric-form bounds gives $\sym\tilde J\preceq-(a+b)I$.
Integrating the Jacobian along the segment from $z$ to $x$ gives
$\langle-\mathcal R(x)+\mathcal R(z),x-z\rangle\ge m\|x-z\|_2^2$.
This yields coercivity and uniqueness; continuity and finite-dimensional coercivity give
existence. Cauchy--Schwarz gives the singular-value and residual bounds. For (ii),
add $\sym\tilde B\preceq-bI$ to $\sym\tilde A$ and take the largest eigenvalue. The full
proof and a continuum counterpart with explicit function-space assumptions are in \S\ref{sec:deriv}.
\end{proof}

Part (i) guarantees that the closed flow problem has a solution and that this solution
is unique. It also bounds the a posteriori velocity error by the a priori momentum
imbalance $\|R(\uref)\|_{D^{-1}}$ divided by the dissipation margin $m$.
This imbalance measures the closure fitting error after insertion into the flow equations,
including any discretisation error, rather than the training loss alone.
Physically, $m$ lower-bounds the weakest damping of a velocity perturbation: a larger
margin limits how strongly a given closure error can alter the predicted flow.
With this residual sign convention, monotonicity refers to $-N$; the viscous
operator $N(U)=1+\nu U^{\prime\prime}$ has this property.
Mere monotonicity without a positive total margin does not suffice for (i).
Both parts require dissipation of the full closure tangent, including diffusion and reaction contributions.

\subsection{The measured index and its limits}
\label{sec:index}

On the assembled Jacobian define
\begin{equation}
\omega(J):=\lambda_{\max}(\sym\tilde J),\qquad
\sigma_{\min}(\tilde J)\ge-\omega(J)\quad\text{if }\omega(J)<0.
\label{eq:omega}
\end{equation}
The weighting matches the continuum $L^2$ metric (\S\ref{sec:fv}). The inverse sensitivity is
$1/\sigma_{\min}(\tilde J)$; its ratio to the frozen value is
$\amp:=\sigma_{\min}(\tilde A)/\sigma_{\min}(\tilde J)$, so $\amp<1$ indicates reduced
sensitivity. This is not the relative condition number
$\sigma_{\max}(\tilde J)/\sigma_{\min}(\tilde J)$.

We measure $J$ at $\uref$. A negative $\omega$ there controls the local inverse but does
not verify the uniform hypothesis of theorem \ref{thm:main}(i). If $\tilde J(\uref)$ is
invertible and the transformed Jacobian is locally Lipschitz, linearisation gives, with
$\hat e=D^{1/2}(U_*-\uref)$ and $\hat r=D^{-1/2}R(\uref)$,
\begin{equation}
\hat e=-\tilde J(\uref)^{-1}\hat r+O(\|\hat e\|_2^2).
\label{eq:local}
\end{equation}
Consequently $\|\hat r\|_2/\sigma_{\min}(\tilde J(\uref))$ bounds the linearised error,
not in general the nonlinear error. Unlike \eqref{eq:eJr}, it uses information at one
state only. Distinct roots imply $\omega(J)\ge0$ somewhere on their connecting segment;
strict positivity need not follow (\S\ref{sec:deriv}).

\section{Channel flow and the closure forms}
\label{sec:channel}

\subsection{Mean-momentum equation and reference data}
\label{sec:equation}

Fully developed, statistically steady plane channel flow reduces the streamwise mean
momentum equation to an ordinary differential equation in the wall-normal coordinate $y$.
Non-dimensionalising with $u_\tau=h=\rho=1$ gives $\nu=1/\Ret$ and unit forcing, and with
$\tau:=-\langle u'v'\rangle$,
\begin{equation}
\begin{gathered}
N(U)+F(U)=0,\qquad N(U):=1+\nu\,\DD U,\qquad F:=\D\tau,\\
y\in[0,2],\qquad U(0)=U(2)=0 .
\end{gathered}
\label{eq:pde}
\end{equation}
so that $A=\D(\nu\D\,\cdot)$ is symmetric negative definite and the network is the only
nonlinearity. Features are restricted to $(U,S,\Ret)$ with $S:=\D U$.
Data are the DNS of \citet{leemoser2015} at $\Ret\in\{180,550,1000,2000,5200\}$ on
wall-clustered meshes of $n=36$ to $1040$ cells (\S\ref{sec:fv}).

\subsection{The three closure forms}
\label{sec:forms}

All forms use the inputs $(U,S,\Ret)$ and the same hidden-layer width and depth,
with the output heads specified below. Differentiating $F$ gives $B$, whose symmetric
part is grouped using \eqref{eq:lemma-sym}.

\emph{Form A, force.} $F=\mathcal{N}^{A}$, so $B=F_U+F_S\D$ with $F_U:=\partial F/\partial
U$, $F_S:=\partial F/\partial S$, and
\begin{equation}
\sym B=F_U-\tfrac12\,\D F_S ,
\label{eq:symA}
\end{equation}
a pure reaction: Form A contributes no diffusion.

\emph{Form B, stress.} $F=\D\mathcal{N}^{B}$, $\tau=\mathcal{N}^{B}$. With
$\tau_U:=\partial\tau/\partial U$ and $\nutan=\partial\tau/\partial S$, in conservative form
\begin{equation}
B=\underbrace{\D\big(\nutan\,\D\;\cdot\big)}_{B_S}+\underbrace{\D\big(\tau_U\,\cdot\big)}_{B_U},
\qquad
\sym B=\D\big(\nutan\,\D\;\cdot\big)+\tfrac12\,\partial_y\tau_U ,
\label{eq:symB}
\end{equation}
the tangent diffusion and a reaction of coefficient $\tfrac12\partial_y\tau_U$.

\emph{Form C, eddy viscosity and residual.} $\tau=\nut S$, $F=\D\tau+F_{\mathrm{res}}$,
with $(\nut,F_{\mathrm{res}})=\mathcal{N}^{C}$. Its tangent combines the stress contribution
of Form B with the force contribution of Form A, where $\tau_U=S\,\partial_U\nut$.

\subsection{Two ways to constrain tangent dissipation}
\label{sec:enforce}

We follow two principles for promoting tangent dissipation. A constraint imposed
\emph{by construction} should be physically meaningful and expressible using only
local flow quantities, so it can be embedded in the constitutive law. A \emph{penalty}
can address more general conditions, including spatial derivatives of tangent
coefficients, but acts only at sampled states and provides no global guarantee.
The following two closures realise these principles.

\emph{Realisation by construction: tangent diffusion (B1).} Following \citet{wehenkel2019},
we parameterise a non-negative tangent viscosity and integrate it to obtain the stress:
\begin{equation}
\tau=\int_0^S\nutan(U,s,\Ret;\theta)\,ds,
\qquad \nutan=\alpha\,\mathrm{softplus}(\mathcal N^B)\ge0,
\quad \alpha>0.
\label{eq:integralform}
\end{equation}
For the exact integral, $\tau(U,0)=0$, $\partial_S\tau=\nutan\ge0$ and
$\sym B_S\preceq0$, so the total tangent diffusion is at least $\nu$.
This constraint is partial: the reaction coefficient $\tfrac12\partial_y\tau_U$
remains free. The implementation uses eight-node quadrature; its relation to the
exact-integral guarantee is detailed in \S\ref{sec:training}.

\emph{Realisation by penalty: tangent reaction (A1).} Form A has no diffusion contribution.
We penalise the positive part of its effective reaction coefficient:
\begin{equation}
\mathcal L_\omega=\big\langle[F_U-\tfrac12\D F_S]_+^2\big\rangle.
\label{eq:penalty}
\end{equation}
The reference set includes the DNS states at all five Reynolds numbers, augmented
with $(U,S)=(\alpha\uref,\alpha S_{\rm ref})$ for
$\alpha\in\{0,1/4,1/2,3/4\}$. Added states carry only the penalty; regression targets
remain those of the DNS.
The penalty is evaluated directly on the reference profiles, without solving the
flow equations, and encourages tangent dissipation only at the sampled states.

For comparison, C0 constrains only the secant viscosity $\nut\ge0$ by softplus,
leaving its tangent sign unrestricted.

\section{Results}
\label{sec:results}

\subsection{Closures, training and a priori accuracy}
\label{sec:res:apriori}

Five closures are compared (table \ref{tab:closures}): the three forms of
\S\ref{sec:forms} and two with constraints in
\S\ref{sec:enforce}. Every closure uses the same two-hidden-layer perceptron of width 48
and the same inputs, $U/30$, the viscous stress $\nu\,\D U$ and a normalised $\log\Ret$.
Training minimises a balanced loss on $d\tau/dy$ by Adam followed by L-BFGS,
with a first-order optimality target and recorded stopping criteria (\S\ref{sec:training}).
Ten seeds per closure give 250 (closure, $\Ret$, seed) cases.

Table \ref{tab:closures} reports the a priori fitting error in the Reynolds force
$F=\partial_y\tau$ over ten seeds. Figure \ref{fig:apriori} shows representative
stress and force profiles, evaluated on DNS features without a flow solve. This
fitting error differs from the deployed residual $R(\uref)$, which uses the
finite-volume operator (\S\ref{sec:fv}).

\begin{table}
\centering\footnotesize\setlength{\tabcolsep}{3pt}
\begin{tabular}{@{}lllrrrrr@{}}
\toprule
closure & predicts & constraint on the operator & 180 & 550 & 1000 & 2000 & 5200 \\
\midrule
A0  force & $F = \mathcal{N}$ & none & 1.01 & 0.93 & 0.85 & 0.64 & 0.55 \\
A1  force $+$ penalty & $F = \mathcal{N}$ & penalty: $F_U-\tfrac12\D F_S\le0$ & 1.77 & 1.51 & 1.34 & 1.03 & 0.78 \\
B0  stress & $\tau = \mathcal{N}$ & none & 0.40 & 0.36 & 0.28 & 0.28 & 0.29 \\
B1  stress, $\int\!\nu_{\rm tan}$ & $\tau = \int_0^S \nutan\,ds$ & construction: $\nutan\ge0$ & 0.57 & 0.71 & 0.82 & 0.62 & 0.32 \\
C0  $\nu_t$ $+$ residual & $F = \partial_y(\nut\partial_yU)+F_{\rm res}$ & $\nut\ge0$: state viscosity only & 0.16 & 0.12 & 0.11 & 0.10 & 0.06 \\
\bottomrule
\end{tabular}


\caption{The five closures, and the a priori error in $F$ [\%] at each $\Ret$ (median over
ten seeds). $\mathcal{N}$ is the network, $S=\D U$, $F_{\rm res}$ the learned residual.}
\label{tab:closures}
\end{table}

\begin{figure}
\centerline{\includegraphics[width=\textwidth]{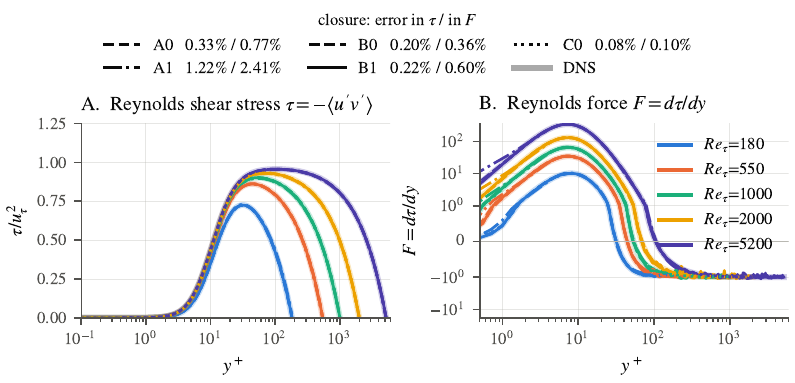}}
\caption{A priori test: DNS features in, no solve. Left, the Reynolds shear stress
$\tau=-\langle u'v'\rangle$; right, the Reynolds force $F=d\tau/dy$, the fitted quantity.
One trained network per closure is shown. Colour is $\Ret$, line style is closure,
and pale curves are DNS.}
\label{fig:apriori}
\end{figure}

\subsection{Dissipation and conditioning}
\label{sec:res:indices}

Figure \ref{fig:indices} gives $\omega(J)$ and $\amp$ at the DNS reference state;
ensemble means and standard deviations are tabulated in \S\ref{sec:bound}. B1 is dissipative
in 49 of 50 cases, A1 in all 50, C0 in 26, and A0 and B0 in none.
B0's $\omega$ reaches $3\times10^4$ at $\Ret=5200$.
B1's mean $\amp$ decreases from $0.075$ at $\Ret=180$ to $0.0027$ at $5200$.
The reciprocals of these mean ratios are $13$ and $372$: while the frozen operator's
inverse sensitivity grows by a factor of 28, B1's varies little.
The added tangent diffusion is consistent with this behaviour, although reaction
also contributes to the full Jacobian. At $\Ret=5200$, the mean ratios are $0.091$
for A1, $0.25$ for C0 and $10\pm12$ for A0. These are local sensitivity measurements,
not the relative condition number or a verification of the global theorem.

\begin{figure}
\centerline{\includegraphics[width=\textwidth]{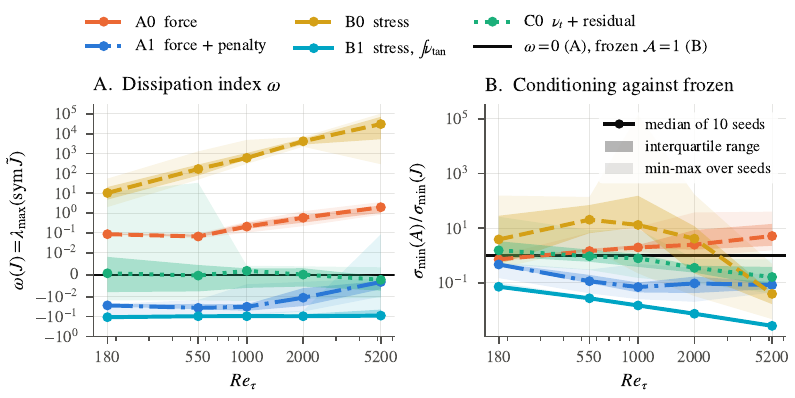}}
\caption{The two indices against $\Ret$; larger is worse in both. (\textit{a}) $\omega(J)$,
dissipative below the line. (\textit{b}) $\amp$, better than frozen below it. Median,
interquartile band and min--max envelope over ten seeds.}
\label{fig:indices}
\end{figure}

The secant and tangent viscosities of C0 are evaluated on all ten trained networks
using the DNS reference velocity and shear at the cell faces. The secant viscosity is non-negative at every face, seed and Reynolds number, with a minimum
of $+1.5\times10^{-70}$; the tangent viscosity \eqref{eq:tangent} is negative, and the
closure diffusion contribution negative, for five seeds of ten at $\Ret=180$, two at $550$ and
$1000$, and one at $2000$ and $5200$, reaching $-0.35\nu$, always in the upper half of the
channel, where $S<0$. The total coefficient $\nu+\nutan$ remains positive in these
measurements; negative closure diffusion alone is not loss of ellipticity.

\subsection{A posteriori accuracy and solution multiplicity}
\label{sec:res:solve}

For the ten-seed ensemble, every solve starts by damped Newton from $U\equiv0$ or by
Reynolds-number continuation ascending from $U\equiv0$ at $\Ret=180$, with the
preconditioner $P=A$; a closure is credited if either converges.

We measure velocity error as
$\epsU=\|U_*-\uref\|_D/\|\uref\|_D$ and retain the smaller error when both starts
converge. A converged solve is classified \emph{physical} if $\epsU<5\,\%$ and \emph{spurious}
otherwise; these labels refer only to velocity accuracy, not to a test of all physical
constraints. Reported velocity-error statistics use the accurate subset.

Table \ref{tab:useful} separates solver convergence from velocity accuracy.
A0 and B0 converge in 39 and 15 of the 50 cases, respectively, but none of their
solutions meets the accuracy threshold. C0 converges in 13 cases, with only one
accurate solution at each Reynolds number; the other eight are spurious.
Thus a converged residual alone does not imply an accurate mean flow.
In comparison, A1 and B1 both converge to accurate solutions for all ten seeds at
every Reynolds number. B1 has smaller mean errors and less variation across seeds:
its mean error decreases from $0.50\,\%$ at $\Ret=180$ to $0.02\,\%$ at $5200$,
whereas A1's means range from $0.81\,\%$ to $1.87\,\%$.
The largest individual errors are $4.478\,\%$ for A1 ($\Ret=5200$) and
$0.535\,\%$ for B1 ($\Ret=180$). Both methods for promoting tangent dissipation
therefore yield reliable convergence and accurate predictions across this ensemble,
with the diffusion construction giving the more accurate and reproducible results.

\begin{table}
\centering\footnotesize\setlength{\tabcolsep}{1.5pt}
\begin{tabular}{@{}lcccccrrrrr@{}}
\toprule
& \multicolumn{5}{c}{physical / converged, of 10} & \multicolumn{5}{c}{$\epsU$ [\%]} \\
\cmidrule(lr){2-6}\cmidrule(l){7-11}
closure & 180 & 550 & 1000 & 2000 & 5200 & 180 & 550 & 1000 & 2000 & 5200 \\
\midrule
A0 & 0\,/\,8 & 0\,/\,9 & 0\,/\,8 & 0\,/\,8 & 0\,/\,6 & -- & -- & -- & -- & -- \\
A1 & \textbf{10\,/\,10} & \textbf{10\,/\,10} & \textbf{10\,/\,10} & \textbf{10\,/\,10} & \textbf{10\,/\,10} & 1.02 $\pm$ 0.92 & 0.85 $\pm$ 0.87 & 0.81 $\pm$ 0.77 & 1.17 $\pm$ 1.07 & 1.87 $\pm$ 1.49 \\
B0 & 0\,/\,3 & 0\,/\,3 & 0\,/\,3 & 0\,/\,3 & 0\,/\,3 & -- & -- & -- & -- & -- \\
B1 & \textbf{10\,/\,10} & \textbf{10\,/\,10} & \textbf{10\,/\,10} & \textbf{10\,/\,10} & \textbf{10\,/\,10} & 0.50 $\pm$ 0.01 & 0.07 $\pm$ 0.00 & 0.06 $\pm$ 0.01 & 0.04 $\pm$ 0.01 & 0.02 $\pm$ 0.00 \\
C0 & 1\,/\,1 & 1\,/\,2 & 1\,/\,3 & 1\,/\,4 & 1\,/\,3 & 0.26 & 0.92 & 0.48 & 1.73 & 0.85 \\
\bottomrule
\end{tabular}


\caption{Left, \emph{physical\,/\,converged} of ten seeds; right, $\epsU$ at the physical
solutions, mean $\pm$ s.d. Bold: every seed physical.}
\label{tab:useful}
\end{table}

The ten-seed ensemble measures sensitivity to training initialization. To test
multiplicity of a fixed closed equation, we separately freeze one seed per
closure at $\Ret=1000$ and apply the same 10,000 initial profiles to each.
The initial condition set contains zero and laminar flow, plus 9,998 random profiles
\begin{equation}
U_0(y)=a\,y(2-y)+b\,\frac{w(y)}{\max_i|w(y_i)|},
\qquad w(y)=\sum_{j=1}^{16}\frac{\xi_j}{j^p}\sin\!\left(\frac{j\upi y}{2}\right),
\label{eq:random-ic}
\end{equation}
where $a\,y(2-y)$ is the parabolic base ($a=0$ omits it),
$\xi_j\sim\mathcal N(0,1)$ are independent, and $p=2$ or $1$ gives smoother or
more oscillatory profiles. The sine series is normalised at the cell centres $y_i$;
random amplitudes $a,b$ include reversed and large-amplitude flows (\S\ref{sec:supp:multistart}).
All profiles satisfy the no-slip wall conditions.
Figure \ref{fig:multimodal} shows 7, 1, 93, 1 and 4 distinct roots for A0, A1, B0, B1
and C0, respectively. Each representative has relative weighted residual below
$10^{-8}$ and relative Newton correction below $10^{-7}$; profiles separated by less
than $10^{-4}$ in relative weighted velocity distance are grouped together.
These results establish numerical multiplicity of the discrete equations for A0, B0
and C0. The single roots found for A1 and B1 provide empirical evidence of robustness
over the sampled starts, not a proof of uniqueness. The initialization distribution,
stopping rules and convergence counts are given in \S\ref{sec:supp:multistart}.

\begin{figure}
\centerline{\includegraphics[width=\textwidth]{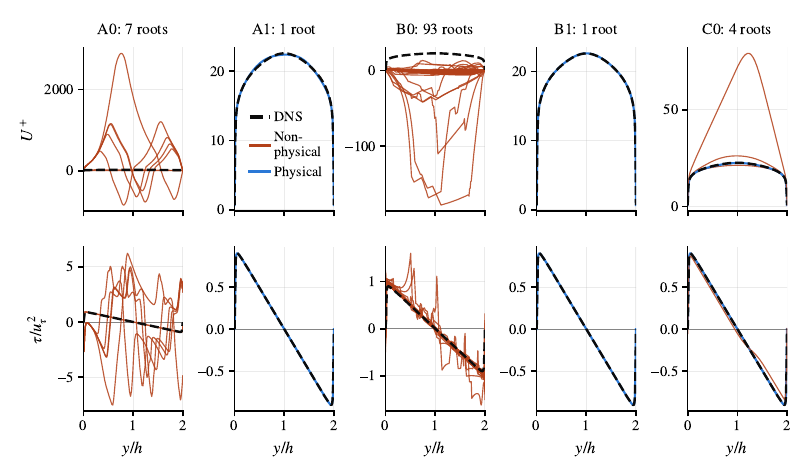}}
\caption{Distinct solutions of fixed closures at $\Ret=1000$, from 10,000 common initial
profiles per closure. Top, velocity; bottom, the reference-normalised stress diagnostic
$\tau_{\rm rec}=(1-y)-\nu S$, using the same discrete gradient for solutions and DNS
(\S\ref{sec:supp:multistart}).
The full channel is shown, with independent vertical ranges retaining all roots.
DNS is dashed; blue denotes the reference-like branch and brown the other roots.
For C0, the secondary near-DNS root ($\epsU=3.90\,\%$) is brown, while the closest root
($0.476\,\%$) is blue; this branch colouring differs from the $5\,\%$ accuracy
classification used in table \ref{tab:useful}.}
\label{fig:multimodal}
\end{figure}

The supplementary measurements distinguish solver efficiency from the sensitivity
of the closed flow problem. For a fixed residual, implicit treatment and preconditioning
change the iteration but not the root set. In the tested converged cases, frozen-viscous,
diffusion-augmented and full-Jacobian preconditioners give velocity errors agreeing to
$1.4\times10^{-12}$, while GMRES counts differ by an order of magnitude (\S\ref{sec:precond}).
Thus they change the work required to compute a solution, without correcting the
closure's modelling error. The Jacobian-based estimate has a different role:
$1/\sigma_{\min}(\tilde J)$ bounds the worst-case linear amplification, attained when
the residual excites the most sensitive direction. An actual closure error need not
align with that direction, so its amplification can be much smaller. Empirically,
the linearised upper estimate exceeds the realised error by a factor of $105\pm201$ over the
105 accurate solutions (\S\ref{sec:bound}). This illustrates its conservatism, without establishing
a global nonlinear bound or verifying the uniform-margin hypothesis of \eqref{eq:eJr}.

\section{Conclusions}
\label{sec:conclusions}

The coupled problem $R=N+F=0$ is controlled by the tangent Jacobian $J=N'+F'$.
For a monotone original operator, global tangent dissipation of the closure guarantees
well-posedness when the PDE or closure supplies a uniform coercive margin, under the
stated regularity assumptions. The margin also bounds inverse sensitivity. For a general
original operator, the closure adds tangent dissipation at each state, but this alone
guarantees neither existence nor uniqueness. Diffusion and reaction both contribute to
dissipation; controlling diffusion alone leaves the reaction unconstrained.

The partial diffusion construction and sampled reaction penalty find accurate
solutions in all 50 channel cases each, with maximum velocity errors of $0.535\,\%$
and $4.478\,\%$, respectively. At $\Ret=1000$, the 10,000-start test finds one root
for each constrained closure and multiple roots for A0, B0 and C0, with one fixed
network per closure. The reciprocals of the mean inverse-sensitivity ratios are
approximately $372$ and $11$ for the construction and penalty at $\Ret=5200$.
Neither approach certifies full global dissipation: the construction leaves reaction
free and uses quadrature, while the penalty is sampled. These experiments support
both approaches without establishing global uniqueness.

\backsection[Supplementary material]{\label{SupMat}Supplementary material (theorem
proofs and continuum counterpart, finite-volume realisation, training and constraint
implementation, solver comparison, local error estimates and multistart protocol) is appended after the references in this preprint.}

\backsection[Acknowledgements]{The authors used OpenAI's ChatGPT and Codex to assist
with drafting and language editing of the manuscript. The authors reviewed and edited
the resulting text and take full responsibility for the content of this work.}

\backsection[Funding]{This research was supported by the Defense Advanced Research
Projects Agency (DARPA) under the Automated Prediction Aided by Quantized Simulators
(APAQuS) program, Grant No. HR00112490526.}

\backsection[Declaration of interests]{The authors report no conflict of interest.}

\backsection[Data and code availability]{The DNS data are those of \citet{leemoser2015}.
The code is available at \url{https://github.com/zhangzhen117/coupled_PDE_NN_closure}.}

\bibliographystyle{jfm}
\bibliography{refs}

\clearpage
\setcounter{section}{0}
\setcounter{subsection}{0}
\setcounter{equation}{0}
\setcounter{figure}{0}
\setcounter{table}{0}
\renewcommand{\thesection}{S\arabic{section}}
\renewcommand{\thefigure}{S\arabic{figure}}
\renewcommand{\thetable}{S\arabic{table}}
\renewcommand{\theequation}{S\arabic{section}.\arabic{equation}}
\renewcommand{\theHsection}{supp.\arabic{section}}
\renewcommand{\theHsubsection}{\theHsection.\arabic{subsection}}
\renewcommand{\theHequation}{supp.\arabic{section}.\arabic{equation}}
\renewcommand{\theHfigure}{supp.\arabic{figure}}
\renewcommand{\theHtable}{supp.\arabic{table}}
\markboth{Z. Zhang and G. E. Karniadakis}{Supplementary material: neural turbulence closures}
\phantomsection
\pdfbookmark[0]{Supplementary material}{supplement}
{\raggedright\Large\bfseries Supplementary material\par}
\medskip
{\raggedright\large Well-posedness of neural turbulence closures and tangent dissipation\par}
\medskip

This supplementary material accompanies the article named above.
The sections follow the main paper's progression: tangent-form identities and
well-posedness proofs (\S\ref{sec:deriv}); discretisation, reference features and norms
(\S\ref{sec:fv}); training and the two constraint implementations (\S\ref{sec:training});
nonlinear solves and preconditioners (\S\ref{sec:precond}); local sensitivity estimates
and measured amplification (\S\ref{sec:bound}); and the fixed-checkpoint multistart
experiment (\S\ref{sec:supp:multistart}). Main-text equations, tables and figures are
identified explicitly; supplementary items carry the prefix S.

\section{Tangent forms and well-posedness proofs}
\label{sec:deriv}

\subsection{Symmetric quadratic form and coefficient bounds}

Let $I=(y_L,y_R)$ be a bounded interval, $V=H^1_0(I)$, and let the real coefficients satisfy
$c_0,c_2\in L^\infty(I)$ and $c_1\in W^{1,\infty}(I)$.
The expression $B=c_0+c_1\D+\D(c_2\D\,\cdot)$ defines a bounded map $V\to V^*$.
Integration by parts in the dual pairing gives
\begin{equation}
\langle Bv,v\rangle_{V^*,V}
=\int_I qv^2\,dy-\int_I c_2|v'|^2\,dy,
\qquad q=c_0-\tfrac12c_1'.
\label{eq:symform}
\end{equation}
Indeed $\int c_1v'v=-\tfrac12\int c_1'v^2$; the trace of $v$ vanishes at both walls.
Polarisation identifies the symmetric form with $q+\D(c_2\D\,\cdot)$.
This form identity does not assert that $\D$ is a skew-adjoint operator on the
Dirichlet domain in $L^2$.

If $c_2\ge\kappa\ge0$ and $q\le\beta$, then Poincar\'e's inequality gives
\begin{equation}
\omega_{L^2}(B):=\sup_{v\in V\setminus\{0\}}
\frac{\langle Bv,v\rangle}{\|v\|_{L^2}^2}
\le\beta-\kappa\lambda_1,\qquad \lambda_1=(\upi/|I|)^2.
\label{eq:coefficientbound}
\end{equation}
Thus $q\le0$ and $c_2\ge0$ suffice for dissipation. For a state-dependent Jacobian,
any claimed global margin requires these bounds uniformly over states, not only on a
reference profile. For the stress form the total diffusion is $\nu+\nutan$ and the
reaction is $q=\tfrac12\partial_y\tau_U$. If uniformly
$\nu+\nutan\ge\kappa>0$ and $q\le\beta$ with $\beta\ge0$ and
$\beta<\kappa\lambda_1$, then
\begin{equation}
-\langle J(U)v,v\rangle\ge
(\kappa-\beta/\lambda_1)\|v'\|_{L^2}^2.
\label{eq:Vmarg}
\end{equation}
The non-negative shear branch of B1 alone does not supply the required reaction bound.

\subsection{Proof of the discrete theorem}

For part (i), use the transformed residual $\mathcal R$ defined in main-text \textup{(\mainref{eq:weighted})} and set
$T=-\mathcal R$. The separate baseline and closure hypotheses imply
$\sym\mathcal R'=\sym\tilde A+\sym\tilde B\preceq-(a+b)I=-mI$.
For $w=x-z$, the fundamental theorem of calculus yields
\begin{equation}
\langle T(x)-T(z),w\rangle
=-\int_0^1w^\top\mathcal R'(z+tw)w\,dt\ge m\|w\|_2^2.
\label{eq:discretemonotone}
\end{equation}
In particular $\langle T(x),x\rangle\ge m\|x\|_2^2-\|T(0)\|_2\|x\|_2$.
For any $g\in\mathbb R^n$, the continuous map $H=T-g$ points strictly outwards
on the sphere of radius $r>(\|T(0)\|_2+\|g\|_2)/m$. It has a zero in the ball:
otherwise $x\mapsto-rH(x)/\|H(x)\|_2$ maps the ball continuously to itself and
Brouwer's fixed-point theorem would give a boundary point with
$\langle H(x),x\rangle<0$, a contradiction. Thus $T$ is onto; strong monotonicity
makes it one-to-one. This proves existence and uniqueness for every forcing
\citep[for finite-dimensional nonlinear equations]{ortegarheinboldt1970}.

For $M=\mathcal R'(x)$, Cauchy--Schwarz implies
$\|Mv\|_2\|v\|_2\ge|v^\top Mv|\ge m\|v\|_2^2$; hence
$\sigma_{\min}(M)\ge m$ and $M$ is invertible. Applying
\eqref{eq:discretemonotone} to a root $x_*$ and an arbitrary $x$ gives
$m\|x_*-x\|_2\le\|\mathcal R(x)\|_2$. Applying it to two forced solutions gives
$m\|x_1-x_2\|_2\le\|g_1-g_2\|_2$. Transforming back proves the residual and
forcing bounds in the main theorem. These estimates have no Taylor remainder.
The constant $1/m$ is sharp: equality holds for $\mathcal R(x)=-mx+c$.
A nonlinear example is $\mathcal R_i(x)=c_i-mx_i-x_i^3$, whose symmetric
Jacobian is $-\operatorname{diag}(m+3x_i^2)$. The global bound remains valid,
although a local inverse at a distant reference need not bound the nonlinear error:
for $\mathcal R(x)=-x-x^3$, $x_*=0$ and $x_0=1$, the error is $1$ but the
linearised estimate is $|\mathcal R(1)/\mathcal R'(1)|=1/2$.

For part (ii), at any fixed state and for every $v$,
\begin{equation}
v^\top\sym\tilde Jv\le v^\top\sym\tilde Av-b\|v\|_2^2.
\end{equation}
Taking the supremum over unit vectors proves $\omega(J)\le\omega(A)-b$.
No baseline monotonicity is needed for this comparison. However, $N(x)=x$ and
$F(x)=-x$ give $R(x)=0$ for every $x$: the closure has margin $b=1$ and reduces
the numerical abscissa from $1$ to $0$, yet uniqueness fails. With $N(x)=x+1$
and the same closure, $R(x)=1$ has no root. Thus the dissipation comparison supplies
neither conclusion in general, even when the baseline itself is invertible.
It does not imply improved smallest singular values either: these scalar examples
reduce that value from $1$ to $0$. If the sum independently has a uniform negative
margin, the proof of (i) applies, but this is additional information.

For a monotone baseline, at least one positive uniform margin is needed for the
stated global well-posedness and inverse bounds. Taking $N=F=0$ shows why two
merely monotone maps are insufficient. For $a\ge0$ and $b>0$ the inverse bound is
at most $1/b$, improved to $1/(a+b)$ when $a>0$. Negative eigenvalues alone do not
imply the symmetric-form hypothesis for a nonnormal Jacobian; and inverse sensitivity
is not the relative condition number, which also involves $\sigma_{\max}(\tilde J)$.

\subsection{A continuum theorem}

Equip $V=H^1_0(I)$ with $\|v\|_V=\|v'\|_{L^2}$ and use the induced dual norm on
$V^*=H^{-1}(I)$. Suppose $T=-R:V\to V^*$ is bounded on bounded sets,
hemicontinuous (the map $t\mapsto\langle T(u+tv),w\rangle$ is continuous), and
uniformly strongly monotone: for some $\alpha>0$,
\begin{equation}
\langle T(u)-T(v),u-v\rangle\ge\alpha\|u-v\|_V^2
\quad\text{for all }u,v\in V.
\label{eq:continuummonotone}
\end{equation}
Then for every $g\in V^*$ there is exactly one $u\in V$ with $T(u)=g$.
To prove existence, observe that
\begin{equation}
\frac{\langle T(u),u\rangle}{\|u\|_V}
\ge\alpha\|u\|_V-\|T(0)\|_{V^*}\longrightarrow\infty.
\end{equation}
The Browder--Minty surjectivity theorem on the reflexive space $V$ applies
\citep{zeidler1990}; strong monotonicity gives uniqueness. For the zero-forcing
root and any reference $v\in V$, dual Cauchy--Schwarz gives
\begin{equation}
\|u_*-v\|_V\le\alpha^{-1}\|R(v)\|_{V^*},\qquad
\|u_*-v\|_{L^2}\le\frac{\|R(v)\|_{V^*}}{\alpha\sqrt{\lambda_1}}.
\label{eq:continuumbound}
\end{equation}
If $R(v)$ is represented by an $L^2$ function, then
$\|R(v)\|_{V^*}\le\lambda_1^{-1/2}\|R(v)\|_{L^2}$ and consequently
$\|u_*-v\|_{L^2}\le(\alpha\lambda_1)^{-1}\|R(v)\|_{L^2}$.
The same proof bounds differences of solutions by differences of forcing in $V^*$.
No continuum singular-value assertion is needed.

To express the continuum hypothesis in terms of the original PDE and the added closure,
suppose $N,F:V\to V^*$ are bounded on bounded sets and hemicontinuous, and for every
$u,v\in V$, with $w=u-v$,
\begin{equation}
\langle-N(u)+N(v),w\rangle\ge a_V\|w\|_V^2,\qquad
\langle-F(u)+F(v),w\rangle\ge b_V\|w\|_V^2,
\end{equation}
where $a_V,b_V\ge0$ and $a_V+b_V>0$. Adding these inequalities gives
\eqref{eq:continuummonotone} with $\alpha=a_V+b_V$. For continuously Fr\'echet
differentiable maps, the inequalities follow from the corresponding uniform tangent
form bounds by integration along segments. If $a_V\ge0$ and $b_V>0$, the inverse
from $V^*$ to $V$ has Lipschitz constant at most $1/b_V$; a positive baseline margin
improves it to $1/(a_V+b_V)$. These are energy-space margins; an $L^2$ tangent bound
alone does not supply the required $H^1_0$ coercivity.

For a general differentiable baseline, global tangent dissipation of $F$ gives only
the continuum form comparison
$\langle (N'(u)+F'(u))w,w\rangle\le\langle N'(u)w,w\rangle$ at each state.
Whenever the $L^2$ numerical abscissae of these forms are finite, the same Rayleigh
quotient argument gives $\omega_{L^2}(N'+F')\le\omega_{L^2}(N')$.
Without an additional coercive monotonicity hypothesis this comparison does not
establish existence or uniqueness. A partial diffusion constraint proves it for
$N'+B_S$, not for $N'+B_S+B_U$ unless the full reaction--diffusion form is controlled.

One sufficient channel setting is $R(u)=f+\nu u''+F(u)$ with $f\in V^*$ and
$F:V\to V^*$ continuously Fr\'echet differentiable and bounded on bounded sets.
If $\langle F'(u)w,w\rangle\le0$ for every $u,w\in V$, integration along segments
gives \eqref{eq:continuummonotone} with $\alpha=\nu$. More generally the uniform
form estimate \eqref{eq:Vmarg} gives $\alpha=\kappa-\beta/\lambda_1$ when the
residual has this mapping regularity. These are explicit sufficient assumptions;
smoothness of a neural network as a function of its finite-dimensional inputs alone
does not establish them on $H^1_0$. An elementary admissible example is a stress
$\tau(y,S)$, independent of $u$, measurable in $y$, continuous and monotone in $S$,
with $|\tau(y,S)|\le a(y)+C|S|$ for $a\in L^2(I)$. Its weak divergence defines a
bounded, hemicontinuous map $V\to V^*$ and $-F$ is monotone, so the viscous term
supplies $\alpha=\nu$. This example is not a global certificate for the trained B1,
which also depends on $U$.

\subsection{What multiplicity implies}

If distinct $x,z$ are roots of a $C^1$ discrete residual, then
$\int_0^1w^\top\mathcal R'(z+tw)w\,dt=0$, with $w=x-z\ne0$.
Continuity implies a point on the segment where the integrand is non-negative,
so $\lambda_{\max}(\sym\mathcal R')\ge0$ there. Strict positivity need not occur:
the scalar $C^1$ residual
\begin{equation}
\mathcal R(x)=\begin{cases}-x^3,&x<0,\\0,&0\le x\le1,\\-(x-1)^3,&x>1\end{cases}
\end{equation}
has an interval of roots while $\mathcal R'(x)\le0$ everywhere. Conversely, a
positive numerical abscissa at a sampled state does not itself prove multiplicity.

\subsection{Closure-specific derivations}

\emph{Force form.} For $F(U,S)$, $Bv=F_Uv+F_Sv'$. Thus $c_2=0$ and
$q=F_U-\tfrac12\partial_yF_S$, the coefficient used in the penalty of
main-text \textup{(\mainref{eq:penalty})}. There is no second-order diffusion contribution.

\emph{Conservative grouping of the stress form.} With $\tau_U=\partial\tau/\partial U$ and
$\nutan=\partial\tau/\partial S$, the chain rule gives $B=\D\circ(\tau_U+\nutan\D)$, which is
the symmetric-form identity with $c_2=\nutan$, $c_1=\tau_U$ and $c_0=\partial_y\tau_U$, hence
$\sym B=\D(\nutan\D\,\cdot)+\tfrac12\partial_y\tau_U$. Expanding $\D(\nutan\D\,\cdot)$ into
$\nutan\DD+(\D\nutan)\D$ splits one self-adjoint operator into a second-order and a
first-order piece, neither self-adjoint on its own; the term $(\D\nutan)\D$ is the gradient
of the diffusivity, not an advection, and its symmetric part $-\tfrac12\DD\nutan$ cancels
against that of $\nutan\DD$. The operator is of second order, not third, since
$\partial\tau/\partial S$ multiplies one derivative of $v$ and the outer $\D$ returns it to
second order.

\emph{The bound for the stress form.} By \eqref{eq:coefficientbound} with $c_2=\nu+\nutan$ and
$c_0-\tfrac12c_1'=\tfrac12\partial_y\tau_U$,
$\omega(J)\le-(\nu+\inf_y\nutan)\lambda_1+\tfrac12\sup_y(\partial_y\tau_U)_+$ whenever
$\nutan\ge0$, with $\lambda_1=(\upi/|I|)^2$ the first Dirichlet eigenvalue of $-\DD$ on
$I$.

\emph{Form C.} With $\tau=\nut(U,S)S$, $\partial\tau/\partial S=\nut+S\,\partial_S\nut$ and
$\partial\tau/\partial U=S\,\partial_U\nut$, so the first term of Form C is Form B with these
coefficients and the residual head contributes $(F_{\rm res})_U-\tfrac12\partial_y(F_{\rm
res})_S$, giving the reaction contribution for Form C in main-text \S\mainref{sec:forms}.

\section{Discretisation, reference features and norms}
\label{sec:fv}

\subsection{Mesh and closure evaluation}
The channel $y\in[0,2]$ is discretised by a cell-centred finite-volume mesh with
faces $y_{i\pm1/2}$, centres $y_i$, volumes $\dV_i=y_{i+1/2}-y_{i-1/2}$ and
centre-to-centre distances $d_{i+1/2}=y_{i+1}-y_i$. At a wall, $d$ is the distance
from the adjacent cell centre to the wall, where $U=0$. The five meshes have
$n=36$, 110, 200, 400 and 1040 cells at nominal $\Ret=180$, 550, 1000, 2000 and 5200.
The DNS velocity is interpolated onto the cell centres to give $\uref$.

The force form is evaluated at cell centres, with shear $S=GU$ obtained from a
three-point gradient stencil incorporating the wall values. Stress-form closures
are evaluated at faces, using linearly interpolated velocity and
$S_{i+1/2}=(U_{i+1}-U_i)/d_{i+1/2}$. Their stress and its derivatives with respect
to the unknown velocities are set to zero at the two walls. The hybrid uses these
face quantities for its stress branch and cell-centred quantities for its residual head.
All signs below use the paper convention $\tau=-\langle u'v'\rangle$; the code stores
the opposite stress sign and converts it when assembling the residual.

Let $(Eq)_i=q_{i+1/2}-q_{i-1/2}$ and $D=\operatorname{diag}(\dV_i)$. The integrated
closure contribution is $D F$ for the force form, $E\tau$ for the stress form,
and $E\tau+D F_{\rm res}$ for the hybrid. All residuals are cell-integrated.
The viscous matrix is $A=\mathcal D(\nu)$, where
\begin{equation}
[\mathcal D(\kappa)v]_i=
\frac{\kappa_{i+1/2}}{d_{i+1/2}}(v_{i+1}-v_i)
-\frac{\kappa_{i-1/2}}{d_{i-1/2}}(v_i-v_{i-1}),
\qquad v_0=v_{n+1}=0.
\label{eq:diffop}
\end{equation}
The integrated forcing is $s_i=\dV_i$, so the coupled residual is
$R(U)=s+AU+F_h(U)$, with $F_h$ the appropriate integrated closure contribution.

\subsection{Jacobian and weighted coordinates}
Automatic differentiation through the assembled stencils gives $J=A+B$.
For the force form,
$B=D\operatorname{diag}(F_U)+D\operatorname{diag}(F_S)G$.
For a face-stress Jacobian $T_f=\partial\tau_f/\partial U$, $B=ET_f$.
Differentiating the complete stencil includes interpolation, shear dependence and wall
conditions. The continuum reaction--diffusion identities guide the interpretation,
but their coefficients are not substituted for the discrete symmetric part.

The matrix $\mathcal D(\kappa)$ is symmetric on a non-uniform mesh and
\begin{equation}
v^\top\mathcal D(\kappa)v=-\sum_f\frac{\kappa_f}{d_f}(\Delta v)_f^2.
\end{equation}
Thus non-negative face coefficients yield a non-positive diffusion form. Removing
wall stress also removes its tangent wall contribution; molecular viscosity retains
the Dirichlet wall terms. Variable-coefficient first-order stencils need not reduce
to diagonal reaction matrices after symmetrisation, especially on a stretched mesh.

The field and integrated-residual norms are, respectively,
\begin{equation}
\|v\|_D=\|D^{1/2}v\|_2,\qquad
\|r\|_{D^{-1}}=\|D^{-1/2}r\|_2.
\end{equation}
The coordinates $x=D^{1/2}U$, $\mathcal R(x)=D^{-1/2}R(D^{-1/2}x)$ give
$\mathcal R'(x)=\tilde J=D^{-1/2}JD^{-1/2}$, as in main-text
\textup{(\mainref{eq:weighted})}. This matrix is similar to $D^{-1}J$, and its Euclidean
singular values measure sensitivity in the stated physical norms. All reported
$\omega(J)$ and $\sigma_{\min}(\tilde J)$ use this weighting.

\subsection{Fitting error versus deployed residual}
The a priori force error in main-text table \mainref{tab:closures} uses DNS-grid
velocity and shear and differentiation on that grid. The curves in main-text
figure \mainref{fig:apriori} use one representative network per closure (seed 0).
By contrast, $R(\uref)$ uses the interpolated velocity and the deployed finite-volume
stencils. Even a small regression error need not give the same-sized deployed residual.
The relative velocity error is
$\epsU=\|U_*-\uref\|_D/\|\uref\|_D$; the reference residual is
$\hat r=D^{-1/2}R(\uref)$. These distinct quantities must not be interchanged in the
inverse-sensitivity estimate. The pointwise momentum imbalance is $D^{-1}R$, whose
volume-weighted norm equals $\|R\|_{D^{-1}}$.

\section{Training and implementation of the constraints}
\label{sec:training}

\subsection{Regression loss and optimisation}
The networks use two hidden layers of width 48 and features $U/30$, $\nu S$ and a
normalised $\log\Ret$, with outputs specified in main-text \S\mainref{sec:forms}.
All five Reynolds numbers enter training. Let $F_i^{\rm ref}$ be the DNS force target
and $w_i$ the DNS-grid weights, normalised to give each Reynolds number equal weight
and $\sum_iw_i=1$. The balanced regression loss is
\begin{equation}
\mathcal L_{\rm fit}=
\frac{\sum_iw_i(F_i-F_i^{\rm ref})^2}{\sum_iw_i(F_i^{\rm ref})^2}
+\sum_iw_i\left(\frac{F_i-F_i^{\rm ref}}{|F_i^{\rm ref}|+1}\right)^2.
\end{equation}
The first term controls global relative error; the second prevents the large near-wall
force from dominating the smaller outer-layer values. Force targets and features are
formed on the DNS grid, rather than by solving the deployed flow equations.

Optimisation uses Adam followed by L-BFGS, with $\|\nabla\mathcal L\|_2\le10^{-5}$
as the first-order target. Stagnation and checkpoint-selection criteria can stop
optimisation before that target; the achieved gradient and stopping reason are retained
with each checkpoint. A1 starts from the trained A0 network of the same seed.
The final gradient norms across the reported ensemble are:
\begin{center}
\begin{tabular}{@{}lr@{}}
\toprule
Closure & Range of $\|\nabla\mathcal L\|_2$\\
\midrule
A0 & $4.2\times10^{-4}$--$5.1\times10^{-3}$\\
A1 & $6.4\times10^{-3}$--$5.2\times10^{-1}$\\
B0 & $7.0\times10^{-4}$--$1.6\times10^{-3}$\\
B1 & $3.3\times10^{-4}$--$1.4\times10^{-3}$\\
C0 & $9.3\times10^{-6}$--$1.6\times10^{-5}$\\
\bottomrule
\end{tabular}
\end{center}
These are training diagnostics, not certificates of the theorem's global hypotheses.

\subsection{Sampled reaction penalty}
For A1, define $q=F_U-\tfrac12\D F_S$. Its positive part is penalised as in
main-text \textup{(\mainref{eq:penalty})}, with $\D F_S$ evaluated by centred differences
on the DNS grid and one-sided differences at its ends. The reference set contains
$(U_{\rm ref},S_{\rm ref})$ and can be augmented by
$(\alpha U_{\rm ref},\alpha S_{\rm ref})$ for $\alpha\in\{0,1/4,1/2,3/4\}$.
The augmented objective retains $\mathcal L_{\rm fit}+30\mathcal L_{\omega,\rm ref}$
and adds $30\mathcal L_{\omega,\rm extra}$, averaged over the four added states and
all five Reynolds numbers. Only the penalty is applied there: no force targets are
assigned to zero or intermediate states. This computation requires no flow solve.
A finite penalty encourages, but need not eliminate, violations at sampled states;
it does not constrain every admissible profile.

\subsection{Integral construction and numerical quadrature}
For B1, the exact law of main-text \textup{(\mainref{eq:integralform})} has a non-negative
integrand $g=\nutan$ and therefore $\partial_S\tau=g(U,S)\ge0$.
The implementation uses eight-node Gauss--Legendre quadrature on $[0,1]$:
\begin{equation}
\tau_Q(U,S)=S\sum_{k=1}^8 w_k g(U,t_kS),\qquad
\partial_S\tau_Q=\sum_{k=1}^8w_k\big[g(U,t_kS)+St_k\partial_Sg(U,t_kS)\big].
\end{equation}
Positive weights preserve $\tau_Q S\ge0$, but do not alone fix the sign of the
quadrature derivative at every state. The assembled Jacobian differentiates this
implemented law. Independently of quadrature accuracy, the reaction contribution
$\tfrac12\partial_y\tau_U$ remains unconstrained. Thus the construction controls
only the ideal law's shear branch; the reported discrete indices assess the full
implemented tangent at the specified reference state.

\section{Nonlinear solver and preconditioner comparison}
\label{sec:precond}

\subsection{Newton--GMRES and stopping rules}
For the fixed residual of main-text \textup{(\mainref{eq:coupled})}, a Newton step solves
\begin{equation}
P^{-1}J(U^k)\delta U=-P^{-1}R(U^k),\qquad
U^{k+1}=U^k+\lambda\delta U.
\end{equation}
GMRES uses relative tolerance $10^{-10}$, restart length $\min(n,200)$ and at most
five restart cycles. The first step length in
$\{1,0.5,0.25,0.1,0.03,0.01,0.003\}$ that reduces the nonlinear residual is accepted;
if none does, the solve fails. The ensemble solves allow at most 60 Newton steps.
With integrated forcing $s_i=\dV_i$, define
\begin{equation}
\rho_R(U)=\frac{\|R(U)\|_{D^{-1}}}{\|s\|_{D^{-1}}}.
\label{eq:solverresidual}
\end{equation}
A solve succeeds if $\rho_R<10^{-11}$, or if two successive stagnation checks occur
below $10^{-8}$, each with the current residual exceeding $0.9$ times the best previous
value. This second criterion accommodates mesh-dependent round-off floors; it is
separate from the GMRES tolerance and from velocity accuracy.

The main ensemble uses $P=A$ with a zero initial profile and an ascending
Reynolds-number continuation. A successful profile is interpolated to the next mesh;
a failed rung resets the next start to zero. A case is credited if either start
converges, and the smaller velocity error is retained if both converge.

\subsection{Preconditioning changes work, not the root set}
The comparison uses the frozen-viscous preconditioner $P=A$, an augmented diffusion
preconditioner, and $P=J$. The augmented choice is $A+B_{\rm diff}$ for force closures,
$A+\mathcal D(\nu_{\rm tan})$ for stress closures and
$A+\mathcal D(\nut)$ for the hybrid. A nonsingular $P$ preserves the roots of $R$.
A Picard method instead approximates the Jacobian in the update; it also retains the
same target equation when converged, but may have different convergence behaviour.

To isolate the inner-solver comparison, these runs start from the DNS reference
profile, unlike the zero/continuation starts used for the main ensemble.
Table \ref{tab:precond} reports GMRES iterations per Newton step at $\Ret=1000$,
averaged over converged cases, and the relative spread in $\epsU$ across
preconditioners. Across the tested cases, the largest absolute difference in
$\epsU$ is $1.4\times10^{-12}$, while iteration counts differ by an order of magnitude.
This compares error metrics, not a direct norm of the difference between profiles.
It measures linear iteration work, not wall-clock cost or factorisation overhead.
Finite linear tolerances can alter convergence basins, so the observed agreement is
not a guarantee that different preconditioners always select the same root.

\begin{table}
\centering\small
\begin{tabular}{@{}lrrrr@{}}
\toprule
closure & $P{=}A$ & $P{=}A{+}B_{\rm diff}/{+}\nueff$ & $P{=}J$ & rel.\ spread in $\epsU$ \\
\midrule
A0  force & 45 $\pm$ 2 & 18 $\pm$ 2 & 1 $\pm$ 0 & 6.6e-12 \\
A1  force $+$ penalty & 28 $\pm$ 6 & 19 $\pm$ 1 & 1 $\pm$ 0 & 7.5e-12 \\
B0  stress & -- & -- & -- & -- \\
B1  stress, $\int\!\nu_{\rm tan}$ & 114 $\pm$ 6 & 20 $\pm$ 1 & 1 $\pm$ 0 & 8.9e-13 \\
C0  $\nu_t$ $+$ residual & 26 $\pm$ 5 & 28 $\pm$ 6 & 1 $\pm$ 0 & 2.3e-12 \\
\bottomrule
\end{tabular}


\caption{GMRES iterations per Newton step at $\Ret=1000$, mean $\pm$ s.d. over converged
reference-initialised runs. The final column is the maximum relative spread in
velocity error across preconditioners.}
\label{tab:precond}
\end{table}

\section{Indices, local estimates and realised amplification}
\label{sec:bound}

Table \ref{tab:indices} tabulates the two indices of main-text figure \mainref{fig:indices}.

\begin{table}
\centering\small\setlength{\tabcolsep}{3pt}
\begin{tabular}{@{}lrrrrr@{}}
\toprule
closure & 180 & 550 & 1000 & 2000 & 5200 \\
\midrule
\multicolumn{6}{@{}l}{$\omega(J)$, negative is dissipative} \\
A0 & 0.088 $\pm$ 0.0062 & 0.071 $\pm$ 0.013 & 0.23 $\pm$ 0.087 & 0.6 $\pm$ 0.3 & 1.9 $\pm$ 0.89 \\
A1 & \textbf{-0.033 $\pm$ 0.017} & \textbf{-0.041 $\pm$ 0.024} & \textbf{-0.037 $\pm$ 0.023} & \textbf{-0.019 $\pm$ 0.016} & \textbf{-0.0046 $\pm$ 0.0026} \\
B0 & 15 $\pm$ 15 & 2.7e+02 $\pm$ 3.5e+02 & 1.2e+03 $\pm$ 1.3e+03 & 4.3e+03 $\pm$ 1.6e+03 & 4e+04 $\pm$ 3.8e+04 \\
B1 & \textbf{-0.11 $\pm$ 0.011} & \textbf{-0.094 $\pm$ 0.0088} & \textbf{-0.091 $\pm$ 0.012} & \textbf{-0.088 $\pm$ 0.019} & -0.067 $\pm$ 0.058 \\
C0 & 1.1 $\pm$ 3.4 & 3.4 $\pm$ 10 & -0.0016 $\pm$ 0.0065 & -0.0017 $\pm$ 0.0054 & -0.0037 $\pm$ 0.0048 \\
\midrule
\multicolumn{6}{@{}l}{$\amp=\sigma_{\min}(A)/\sigma_{\min}(J)$, below one is better than frozen} \\
A0 & \textbf{0.76 $\pm$ 0.13} & 2.1 $\pm$ 1.6 & 3.4 $\pm$ 3.1 & 7.7 $\pm$ 11 & 10 $\pm$ 12 \\
A1 & \textbf{0.44 $\pm$ 0.12} & \textbf{0.14 $\pm$ 0.08} & \textbf{0.11 $\pm$ 0.13} & \textbf{0.12 $\pm$ 0.098} & \textbf{0.091 $\pm$ 0.042} \\
B0 & 25 $\pm$ 46 & 42 $\pm$ 46 & 1.4e+04 $\pm$ 4.2e+04 & 27 $\pm$ 52 & \textbf{0.12 $\pm$ 0.18} \\
B1 & \textbf{0.075 $\pm$ 0.01} & \textbf{0.028 $\pm$ 0.0015} & \textbf{0.015 $\pm$ 0.00044} & \textbf{0.0075 $\pm$ 0.00013} & \textbf{0.0027 $\pm$ 1.6e-05} \\
C0 & 5.5 $\pm$ 8.3 & 1.9 $\pm$ 2.4 & 1.6 $\pm$ 2.6 & \textbf{0.38 $\pm$ 0.2} & \textbf{0.25 $\pm$ 0.22} \\
\bottomrule
\end{tabular}


\caption{The two indices at $\uref$, mean $\pm$ s.d.\ over ten seeds. Upper block,
$\omega(J)$, bold where negative for every seed. Lower block, $\amp$, bold where below one.}
\label{tab:indices}
\end{table}

The table uses means and standard deviations; the main figure uses medians,
interquartile bands and extrema. The quoted improvement factors $372$ and $11$ are
$1/\langle\amp\rangle$ at $\Ret=5200$, not $\langle1/\amp\rangle$ and not relative
condition numbers.

\subsection{Local estimate and nonlinear remainder}

Let $x_0=D^{1/2}\uref$, $\hat e=x_*-x_0$, $\hat r=\mathcal R(x_0)$, and
$M=\mathcal R'(x_0)$ invertible. Suppose $\mathcal R'$ is Lipschitz with constant
$L$ in a convex neighbourhood containing the segment from $x_0$ to $x_*$.
Taylor's integral formula gives
\begin{equation}
0=\hat r+M\hat e+q,\qquad \|q\|_2\le\tfrac L2\|\hat e\|_2^2.
\end{equation}
Thus, with $\hat e_{\rm lin}=-M^{-1}\hat r$,
\begin{equation}
\|\hat e-\hat e_{\rm lin}\|_2
\le\frac{L}{2\sigma_{\min}(M)}\|\hat e\|_2^2,\qquad
\|\hat e_{\rm lin}\|_2\le\frac{\|\hat r\|_2}{\sigma_{\min}(M)}.
\label{eq:localbound}
\end{equation}
The second inequality bounds the linearised error exactly. It is not an unconditional
nonlinear bound, nor does an accuracy threshold control the remainder. By contrast,
the global monotonicity bound in part (i) of the main theorem has no remainder but requires a
uniform margin over all states. The experiments do not establish such a margin for A1
or B1.

For nonzero $\hat r$, define
\begin{equation}
\gamma:=\frac{\sigma_{\min}(M)\|M^{-1}\hat r\|_2}{\|\hat r\|_2}\in(0,1],
\qquad
\frac{\|\hat e_{\rm lin}\|_2}{\|\hat r\|_2}
=\frac{\gamma}{\sigma_{\min}(M)}.
\end{equation}
Equality in the upper bound occurs when the residual lies in the left singular
subspace corresponding to the smallest singular value. Small $\gamma$ means limited
excitation of the most sensitive directions; no assertion of exact orthogonality or
residual smoothness is required.

\subsection{Empirical comparison}

With $r=R(\uref)$, the measured amplification is
$\|U_*-\uref\|_D/\|r\|_{D^{-1}}$; the linear prediction is
$\|M^{-1}\hat r\|_2/\|\hat r\|_2$ and the upper estimate is $1/\sigma_{\min}(M)$.
Their comparison uses the deployed reference residual, not the regression loss.
Figure \ref{fig:bound} and table \ref{tab:ratio} compare the realised amplification
with $1/\sigma_{\min}(M)$ on the 105 solutions within the main text's accuracy threshold.
The realised amplification is below this linearised upper estimate in every such case;
the estimate/realised ratio is $105\pm201$, ranging from 7 to $1.2\times10^3$. Agreement with the linear
prediction suggests that directional alignment accounts for much of the looseness in
these cases, but this observation is not a proof that the nonlinear remainder is small
for other roots. The selection is by accuracy: A0 and B0 are absent because no accurate
root was found from the tested starts. The figure does not validate the global theorem
or exclude other roots.

\begin{figure}
\centerline{\includegraphics[width=\textwidth]{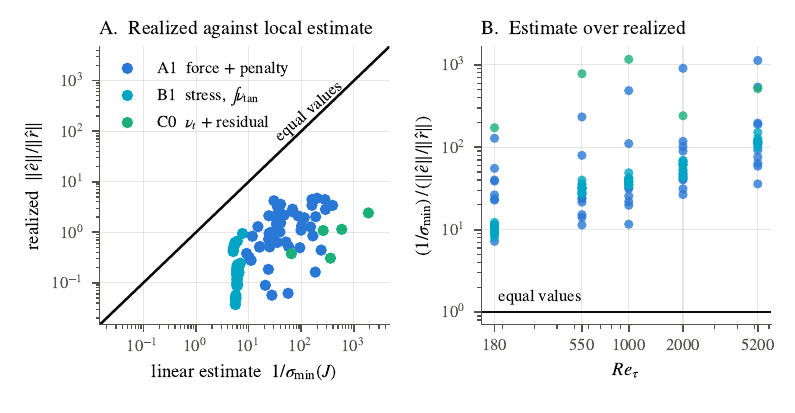}}
\caption{(\textit{a}) Realised amplification against the linearised upper estimate at the 105 accurate
solutions selected from the main ensemble, all below the diagonal. (\textit{b}) The ratio of the linearised upper estimate to the realised
amplification.}
\label{fig:bound}
\end{figure}

\begin{table}
\centering\footnotesize\setlength{\tabcolsep}{3pt}
\begin{tabular}{@{}lrrrrrr@{}}
\toprule
closure & $\Ret$ & $\|\hat r\|$ & estimate $1/\sigma_{\min}$ & linear & realized & $\gamma^{-1}$ \\
\midrule
A0 & 180 & 0.091 $\pm$ 0.011 & 56 $\pm$ 9.9 & 2.5 $\pm$ 1.3 & -- & -- \\
A0 & 550 & 0.089 $\pm$ 0.014 & 4.7e+02 $\pm$ 3.6e+02 & 17 $\pm$ 16 & -- & -- \\
A0 & 1000 & 0.12 $\pm$ 0.0098 & 1.4e+03 $\pm$ 1.3e+03 & 44 $\pm$ 52 & -- & -- \\
A0 & 2000 & 0.15 $\pm$ 0.012 & 6.3e+03 $\pm$ 8.6e+03 & 2.3e+02 $\pm$ 4.1e+02 & -- & -- \\
A0 & 5200 & 0.2 $\pm$ 0.017 & 2.2e+04 $\pm$ 2.5e+04 & 4.6e+02 $\pm$ 5e+02 & -- & -- \\
A1 & 180 & 0.12 $\pm$ 0.031 & 33 $\pm$ 8.7 & 2 $\pm$ 1.7 & 1.8 $\pm$ 1.3 & 36 $\pm$ 34 \\
A1 & 550 & 0.15 $\pm$ 0.059 & 31 $\pm$ 18 & 1.2 $\pm$ 0.75 & 1.3 $\pm$ 0.89 & 49 $\pm$ 64 \\
A1 & 1000 & 0.17 $\pm$ 0.053 & 46 $\pm$ 53 & 1.3 $\pm$ 1.3 & 1.3 $\pm$ 1.3 & 82 $\pm$ 138 \\
A1 & 2000 & 0.23 $\pm$ 0.079 & 98 $\pm$ 79 & 1.7 $\pm$ 1.3 & 1.5 $\pm$ 1.1 & 141 $\pm$ 256 \\
A1 & 5200 & 0.34 $\pm$ 0.084 & 1.9e+02 $\pm$ 87 & 2.5 $\pm$ 2.5 & 2 $\pm$ 1.5 & 239 $\pm$ 324 \\
B0 & 180 & 9.5 $\pm$ 13 & 1.9e+03 $\pm$ 3.4e+03 & 1.6 $\pm$ 3.8 & -- & -- \\
B0 & 550 & 12 $\pm$ 7 & 9.3e+03 $\pm$ 1e+04 & 0.82 $\pm$ 1.1 & -- & -- \\
B0 & 1000 & 16 $\pm$ 8.6 & 5.7e+06 $\pm$ 1.7e+07 & 3.5 $\pm$ 8.3 & -- & -- \\
B0 & 2000 & 22 $\pm$ 12 & 2.2e+04 $\pm$ 4.2e+04 & 0.12 $\pm$ 0.23 & -- & -- \\
B0 & 5200 & 34 $\pm$ 20 & 2.4e+02 $\pm$ 3.8e+02 & 0.014 $\pm$ 0.019 & -- & -- \\
B1 & 180 & 0.21 $\pm$ 0.04 & 5.6 $\pm$ 0.74 & 0.58 $\pm$ 0.14 & 0.58 $\pm$ 0.14 & 10 $\pm$ 1 \\
B1 & 550 & 0.098 $\pm$ 0.013 & 6.1 $\pm$ 0.32 & 0.19 $\pm$ 0.028 & 0.19 $\pm$ 0.028 & 32 $\pm$ 4 \\
B1 & 1000 & 0.1 $\pm$ 0.0097 & 6.1 $\pm$ 0.18 & 0.16 $\pm$ 0.016 & 0.16 $\pm$ 0.016 & 39 $\pm$ 4 \\
B1 & 2000 & 0.11 $\pm$ 0.0061 & 6.1 $\pm$ 0.1 & 0.11 $\pm$ 0.019 & 0.11 $\pm$ 0.019 & 58 $\pm$ 9 \\
B1 & 5200 & 0.13 $\pm$ 0.0096 & 5.6 $\pm$ 0.034 & 0.048 $\pm$ 0.0051 & 0.048 $\pm$ 0.0051 & 118 $\pm$ 13 \\
C0 & 180 & 0.28 $\pm$ 0.39 & 4e+02 $\pm$ 6.2e+02 & 0.6 $\pm$ 0.67 & 0.38 & 158 \\
C0 & 550 & 0.17 $\pm$ 0.19 & 4.2e+02 $\pm$ 5.2e+02 & 1.3 $\pm$ 2.2 & 2.4 & 244 \\
C0 & 1000 & 0.14 $\pm$ 0.11 & 6.4e+02 $\pm$ 1e+03 & 2.1 $\pm$ 2.6 & 0.31 & 1240 \\
C0 & 2000 & 0.16 $\pm$ 0.12 & 3.1e+02 $\pm$ 1.7e+02 & 1.2 $\pm$ 1.1 & 1.1 & 224 \\
C0 & 5200 & 0.18 $\pm$ 0.054 & 5.3e+02 $\pm$ 4.5e+02 & 0.74 $\pm$ 1.1 & 1.1 & 507 \\
\bottomrule
\end{tabular}


\caption{The linearised upper estimate against what is realised, mean $\pm$ s.d.\ over seeds, by closure and $\Ret$.
The residual, estimate and linear columns use all seeds; realised amplification and
$\gamma^{-1}$ use only accurate roots. The latter is estimate/linear, whereas figure
\ref{fig:bound}(\textit{b}) shows estimate/realised.
A dash means no solution within the accuracy threshold was found at that entry.}
\label{tab:ratio}
\end{table}

\section{Fixed-checkpoint multistart experiment}
\label{sec:supp:multistart}

\subsection{Frozen models and shared initial profiles}
At $\Ret=1000$ on the 200-cell mesh, we freeze A0 seed 0, A1 seed 0, B0 seed 8,
B1 seed 0 and C0 seed 9. Each had a converged solution in the earlier ensemble.
The same fixed weights are used for every initial condition of a given closure.
This tests multiplicity of one discrete equation per closure, rather than variability
between different trained networks.

All closures receive an identical set of 10,000 initial profiles, generated with
random-number seed 20260913. The first two are zero and laminar flow,
$U_{\rm lam}=-A^{-1}s$. The remaining 9,998 use main-text \textup{(\mainref{eq:random-ic})}:
\begin{equation}
U_0(y)=a\,y(2-y)+b\,v_p(y),\qquad
v_p(y)=\frac{\sum_{j=1}^{16}\xi_jj^{-p}\sin(j\upi y/2)}
{\max_i|\sum_{j=1}^{16}\xi_jj^{-p}\sin(j\upi y_i/2)|},
\end{equation}
with independent $\xi_j\sim\mathcal N(0,1)$. They cycle through four distributions:
\begin{center}
\begin{tabular}{@{}lccc@{}}
\toprule
Profile type & $p$ & $a$ & $b$\\
\midrule
Parabolic base and smooth perturbation & 2 & $\mathcal U(0,60)$ & $\mathcal U(0,10)$\\
Signed smooth profile & 2 & 0 & $\mathcal U(-100,100)$\\
Large smooth profile & 2 & 0 & $\pm10^\eta$, $\eta\sim\mathcal U(2,3)$\\
More oscillatory profile & 1 & $\mathcal U(-50,50)$ & $\mathcal U(0,50)$\\
\bottomrule
\end{tabular}
\end{center}
The sign in the third distribution is equiprobable. The sine and parabolic functions
vanish at both walls; no DNS profile is supplied as an initial condition.

\subsection{Screening, refinement and root separation}
The solver and early exits are those of \S\ref{sec:precond}, with $P=A$.
Screening is capped at 20 Newton steps or 15 seconds per initial condition.
A candidate with $\rho_R<10^{-8}$ is compared with the roots already found.
A new candidate receives up to 60 refinement steps or 30 seconds, using the same
Newton implementation. Each retained representative must satisfy
\begin{equation}
\rho_R(U)<10^{-8},\qquad
\frac{\|\delta U\|_D}{\max(1,\|U\|_D)}<10^{-7},\qquad
J(U)\delta U=-R(U).
\end{equation}
Profiles are grouped if
$\|U-V\|_D/\max(1,\|U\|_D,\|V\|_D)<10^{-4}$.
The calculation is divided into shards, whose root representatives are merged using
the same criterion. All merged representatives are independently verified with a
dense Jacobian solve. A hit counts a low-residual profile assigned to a verified root;
it need not have triggered the original solver's success flag. No screening start
reached the wall-clock cap.

\begin{table}
\centering\small
\begin{tabular}{@{}lrrr@{}}
\toprule
Closure & Initial profiles & Root hits & Distinct roots\\
\midrule
A0 & 10,000 & 910 & 7\\
A1 & 10,000 & 5,617 & 1\\
B0 & 10,000 & 230 & 93\\
B1 & 10,000 & 4,024 & 1\\
C0 & 10,000 & 2,993 & 4\\
\bottomrule
\end{tabular}
\caption{Fixed-model search underlying main-text figure \mainref{fig:multimodal}.
Hit counts depend on the specified initial-profile distribution and stopping rules.}
\label{tab:multistart}
\end{table}
Multiple verified roots provide numerical evidence of discrete nonuniqueness; no
claim is made that every branch persists under mesh refinement or in the continuum.
One observed root does not establish uniqueness, and failed starts do not establish
nonexistence. This experiment is separate from the 250-case ensemble in main-text
table \mainref{tab:useful}.

\subsection{Velocity and stress visualisation}
Each curve in main-text figure \mainref{fig:multimodal} is one distinct root, not one
initial condition. All fields are shown on $y/h\in[0,2]$ with independent vertical
ranges. The displayed stress diagnostic is reconstructed from the velocity using
$\tau_{\rm rec}=1-y-\nu GU$, with the same cell-gradient stencil for DNS and computed
profiles. This fixes the integration constant in $\nu U'+\tau=C-y$ to the DNS
normalisation $C=1$. It is not a separate comparison with the network's stress output;
for asymmetric computed profiles, the wall-stress partition need not have this value.

Blue denotes the reference-like branches and brown the other roots. In C0, the root
with $3.90\,\%$ velocity error is brown and the closest root, with $0.476\,\%$ error,
is blue. This branch colouring differs from the uniform $5\,\%$ error threshold
used for the ensemble table and is not an additional physical-admissibility test.

\end{document}